\title{A no-go theorem for one-layer feedforward networks}
\author[C. Giusti]{Chad Giusti}
\address{Mathematics Department, 
University of Nebraska -- Lincoln}
\email{cgiusti2@unl.edu}
\author[V. Itskov]{Vladimir Itskov}
\address{Mathematics Department, 
University of Nebraska -- Lincoln}
\email{vladimir.itskov@math.unl.edu}
\newtheorem{thm}{Theorem}
\newtheorem{cor}[thm]{Corollary}
 \newtheorem{prop}[thm]{Proposition}
\theoremstyle{definition}
\newcommand{\C}{{\mathcal C}}
\newcommand{\R}{{\mathbb R}}
\newcommand{\nerve}{{\mathcal{N}}}
\newcommand{\Supp}{\text{supp}}
\newcommand{\refT}[1]{Theorem~\ref{T:#1}}
\newcommand{\refP}[1]{Proposition~\ref{P:#1}}
\newcommand{\od}{\stackrel{\mbox {\tiny {def}}}{=}}
\def\RR{\mathbb{R}}
\def\C{\mathcal{C}}
\begin{document}
 \maketitle 
 \begin{center}
Department of Mathematics, University of Nebraska-Lincoln\\
\indent\;  cgiusti2@unl.edu,  vladimir.itskov@math.unl.edu
\end{center}
\vspace{.25in}

\begin{abstract}
It is often hypothesized that a crucial  role for recurrent connections in the brain  is to constrain the set of possible response patterns, thereby shaping the neural code.   This implies the existence of  neural codes that cannot arise solely from feedforward processing.  We set out to find such codes in the context of one-layer feedforward networks, and identified a large class of combinatorial codes that indeed cannot be shaped by the feedforward architecture alone.  However, these codes are difficult to  distinguish  from codes that share the same sets of maximal activity patterns in the presence of noise.  When we coarsened the notion of combinatorial neural code to keep track only of maximal patterns, we found the surprising result that all such  codes can in fact be realized by one-layer feedforward networks.  This suggests that recurrent or many-layer feedforward architectures are not necessary for shaping the (coarse) combinatorial features of neural codes.  In particular, it is not possible to infer a computational role for recurrent connections from the combinatorics of neural response patterns alone.

Our proofs use mathematical tools from classical combinatorial topology, such as the nerve lemma and the existence of an inverse nerve.  An unexpected corollary of our main result is that any prescribed (finite) homotopy type can be realized by a subset of the form $\RR^n_{\geq 0} \setminus \mathcal{P}$, where $\mathcal{P}$ is a polyhedron.
 
 \end{abstract}
\newpage

\section{Introduction} 

It is often hypothesized that one of the central roles of recurrent connections in the brain  is to constrain the set of possible neural response patterns, thereby shaping the neural code   \cite{Douglas2007,Luczak2009}.  This hypothesis is appealing because it provides a concrete computational function for the prevalence of recurrent connectivity in cortical areas.  It also implies the existence of neural codes that {\it cannot} arise from the structure of feedforward connections alone.
 
We test this hypothesis by analyzing the neural codes of feedforward networks. Although it is well-known that  a feedforward network with hidden layers can approximate any continuous function \cite{Cybenko89, Hornik91}, this is not the case for one-layer networks. It is thus reasonable to expect that there exist neural  codes that cannot arise in one-layer feedforward networks, and are thus necessarily shaped by a more complex network structure.

 For simplicity, we consider combinatorial neural codes  \cite{Schneidman:2006:Nature, Osborne:2008:J-Neurosci,Schneidman2011}, which consist of binary neural activity patterns and disregard details such as exact firing rates or precise spike timing. Our first result
identifies a large class of combinatorial codes that are not realized by one-layer feedforward networks.  However, we also find that in the presence of noise these codes are  difficult to  distinguish from other codes that share the same sets of maximal activity patterns.  In order to increase robustness to noise, we coarsen the notion of combinatorial code to keep track only of maximal patterns, and again seek to find codes that cannot arise in one-layer feedforward networks.  To our surprise, there are none.  

Our main result is a ``no-go'' theorem stating that all coarse combinatorial codes can in fact be realized by one-layer feedforward networks.  
Our proof of this theorem is constructive, and uses mathematical tools from classical combinatorial topology, such as the nerve lemma and the existence of an inverse nerve.  An unexpected corollary is that any prescribed (finite) homotopy type can be realized by a subset of the form $\RR^n_{\geq 0} \setminus \mathcal{P}$, where $\mathcal{P}$ is a polyhedron.

This ``no-go'' theorem implies  that recurrent or many-layer feedforward architectures are not necessary for shaping combinatorial features of neural codes.  In particular, it is not possible to infer a computational role for recurrent connections solely from the combinatorics of neural response patterns.    
However, we also show that one-layer feedforward networks that respect Dale's law \cite{DayanAbbott} 
 produce fairly trivial neural codes, possessing  just one maximal activity pattern. Thus, one-layer feedforward networks can only produce interesting  codes if the projections from each  input  neuron are allowed to have both positive and negative weights.  This suggests that recurrent architecture or hidden layers may  be necessary to compensate for the restrictions imposed by Dale's law.

 \section{Background}
 
 A {\it combinatorial neural code} of a given population of $n$ neurons  is the collection of all possible combinations of neurons that can be simultaneously  active. More precisely, a {\it codeword} is a subset 
 $$\sigma\subset [n]\od \{1,\dots,n\}$$ 
 of neurons that are simultaneously active (i.e. fire within some small temporal window)  at some point of time, while a combinatorial {\it code} is the collection of all such subsets
$$\C \subset 2^{[n]} \od \{\text{subsets of }[n]\}.$$
 Note that this  notion of neural code describes only the set of possible response patterns of a network \cite{Schneidman:2006:Nature, Osborne:2008:J-Neurosci,Schneidman2011}, but does not include the ``dictionary'' of relationships between response patterns and network inputs.

 \begin{figure} 
    \vspace{-.15in}
    \setlength{\unitlength}{2in}
    \begin{picture}(1,0.68647504)%
      \put(0,0){\includegraphics[width=\unitlength]{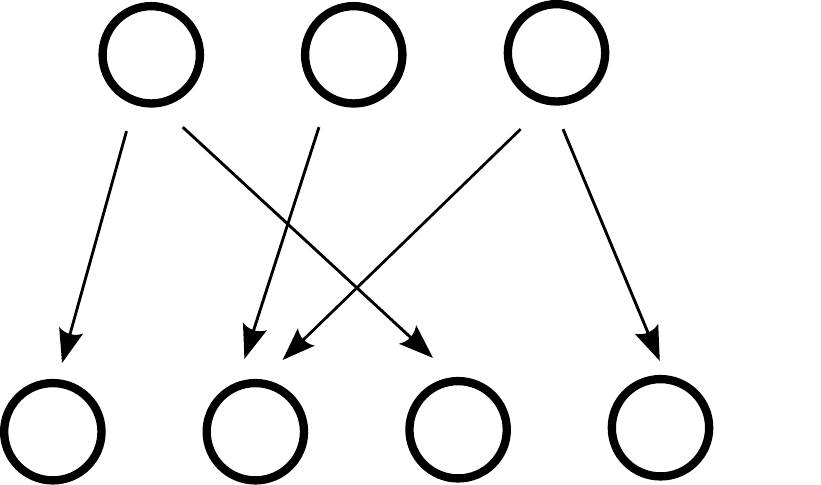}}%
      \put(0.9,0.04){\color[rgb]{0,0,0}\makebox(0,0)[lb]{$x_i$}}%
      \put(0.78,0.5){\color[rgb]{0,0,0}\makebox(0,0)[lb]{$y_a$}}%
      \put(0.77,0.26){\color[rgb]{0,0,0}\makebox(0,0)[lb]{$U_{ia}$}}%
    \end{picture}%
    \vspace{.1in}
    \caption{One-layer feedforward network}
    \label{f:network}
  \end{figure} 
We investigate what combinatorial codes can arise as response patterns in a {\it one-layer feedforward network} -- that is, a collection of { uncoupled}  neurons ({\it perceptrons})   \cite{ROSENBLATT1958}, driven by an input layer of neurons (Figure \ref{f:network}).  The  firing rates, $x_i(t)\geq 0$    of such neurons   can be approximated by 
the equations 
\begin{equation}\label{eq:network:rate:model}
x_i= \phi_i\left(\sum_{a=1}^m U_{ia}y_a  -\theta_i  \right), \;\qquad  i = 1,...,n.
\end{equation} 
where  $\theta_i>0$ are the neuronal thresholds,    $U_{ia}$ are the effective strengths of the feedforward connections,  $y_a(t)\geq 0$ are the firing rates of the neurons in the input layer,  and the {\it transfer functions} $\phi_i\colon{\mathbb R}\to  {\mathbb R_{\geq 0}}$ are  monotone increasing.

  For  a non-negative  firing rate vector $x\in \mathbb R^n_{\geq 0},$ we denote the subset of co-active neurons as 
\begin{equation*}\Supp(x)=\{i\in [n]\;  \vert \; x_i>0\}.
\end{equation*}

   Given a particular choice of the transfer functions $\phi_i$, the combinatorial neural code $\C(U,\theta)\subseteq 2^{[n]}$ of the network described by equation  \eqref{eq:network:rate:model} is the collection of all possible subsets of neurons that can be co-activated by nonnegative firing rate inputs $y\in \mathbb R^m_{\geq0}$:
\begin{equation*} \C(U,\theta)=\left \{\sigma=\Supp(x(y)) \,\; \vert \; 
\,  y\in \mathbb R^m_{\geq 0}    \right\}.
\end{equation*}

Our definition of $\C(U,\theta)$ captures all possible response patterns of the network, without any constraints on the network inputs.
While it may be more realistic to assume that the inputs $y$ are themselves constrained to a subset $Y \subset  \mathbb R^m_{\geq0}$, our primary interest is in combinatorial codes
that are shaped by the structure of the feedforward network, as opposed to codes whose structure is largely inherited from the previous layer.  As an extreme example, consider a ``relay'' network where the output layer is a copy of the input later, and each input neuron relays its activity to exactly one output neuron.\footnote{I.e., $U$ is the identity matrix.} Then, any code could in principle be realized via an appropriate choice of $Y$, even though the network itself plays no role in shaping the code.
Because we are interested in networks whose computational function is to shape the structure of the code, we consider 
 the opposite extreme, where input firing rates are allowed to range over all nonnnegative values, and the code $\C(U,\theta)$ is completely determined by the feedforward connections $U$ and the thresholds $\theta$.

It is well-known that  a collection of $n$ perceptrons described by equation \eqref{eq:network:rate:model} can be thought of as an arrangement of $n$ hyperplanes in the non-negative orthant $\mathbb R^m_{\geq 0},$ where $m$ is the size of the input layer. Without loss of generality, we can   assume that the monotone increasing functions $\phi_i(t)$ satisfy the conditions 
\begin{equation} \label{eq:phi} \phi_i(t)=0,  \,\,  \text{ if }  t \leq 0,\quad \text{ and } \quad \phi_i(t) >0 \,\text{ if } t>0.
\end{equation}
This condition implies that, given an input   $y\in \mathbb R^m_{\geq 0},$  
$$ x_i>0 \iff   \quad \sum_{a=1}^m U_{ia}y_a  -\theta_i >0.$$
The combinatorial code of the network, $\C(U,\theta)$, can thus be identified with the list of the regions into which the  above hyperplanes partition the positive orthant $\mathbb R^m_{\geq 0},$ as in Figure \ref{f:hyperplane}.    More precisely\footnote{Here, by convention:  $ \bigcap_{i\in\emptyset}H^+_i =\bigcap_{j\notin [n]}H^-_j=\R^m_{\geq 0}. $},  
\begin{equation}\label{eq:codedescription}
\C(U,\theta)=\left\{\sigma\subseteq [n] \,\, \vert \,\,\left( \cap_{i\in\sigma} H^+_i \right)\cap \cap_{j\notin\sigma} H^{-}_j \neq \emptyset\right\},
\end{equation} 
where
\begin{align*}
H_i^+ &=  \{y \in \R^m_{\geq 0} \;|\; (Uy)_i > \theta_i\},\quad \text{ and } \\
H_i^- &=  \{y \in \R^m_{\geq 0} \;|\; (Uy)_i \leq  \theta_i\}.\\
\end{align*}

\begin{figure}
 \setlength{\unitlength}{2in}
  \begin{picture}(1,1)%
    \put(0,0){\includegraphics[width=2in, height=2in]{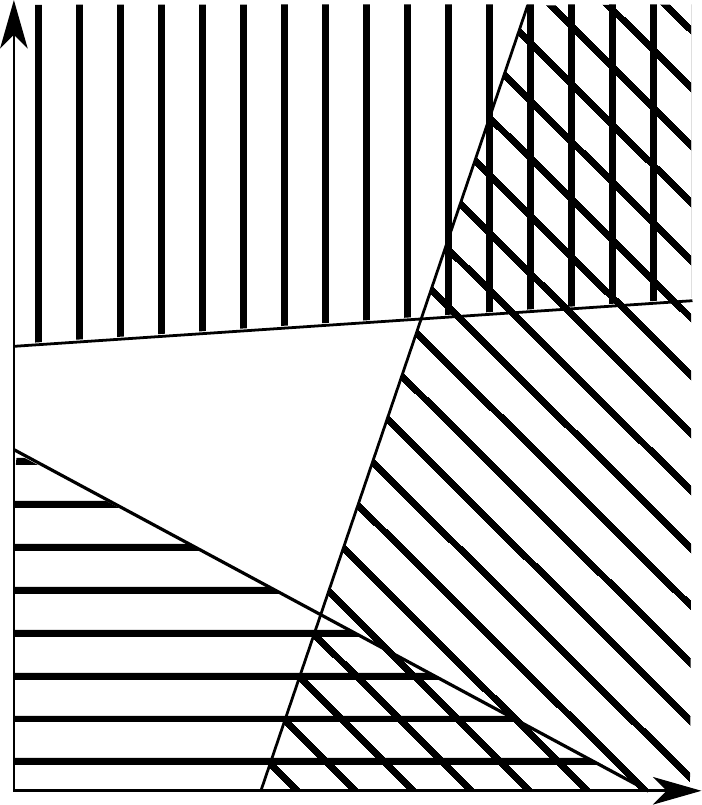}}%
    \put(0.3,0.45){\color[rgb]{0,0,0}\colorbox{white}{\tiny $\emptyset$}}%
    \put(0.21,0.75){\color[rgb]{0,0,0}\colorbox{white}{\tiny $\{1\}$}}%
    \put(0.66, 0.33){\color[rgb]{0,0,0}\colorbox{white}{\tiny $\{2\}$}}%
    \put(0.08,0.18){\color[rgb]{0,0,0}\colorbox{white}{\tiny $\{3\}$}}%
    \put(0.73,0.78){\color[rgb]{0,0,0}\colorbox{white}{\tiny $\{1,2\}$}}%
    \put(0.45,0.1){\color[rgb]{0,0,0}\colorbox{white}{\tiny $\{2,3\}$}}%
    \put(.97,-.05){\color[rgb]{0,0,0}\colorbox{white}{\tiny $y_1$}}%
    \put(-0.1,1){\color[rgb]{0,0,0}\colorbox{white}{\tiny $y_2$}}%
    
  \end{picture}%
  \caption{Codewords in $\C(U,\theta)$ correspond to regions in the positive orthant $R^m_{\geq 0}$.}
  \label{f:hyperplane}
\end{figure}

\section{Results}

We first observe that not every combinatorial neural code $\C \subset 2^{[n]}$ can be encoded by a one-layer feedforward network.   A  code $\C $    is called {\it convex} \cite{neural-ring}  if there exists a collection of convex subsets $\{V_i\}_{i=1}^{n}$ of  $X\subset \mathbb R ^d$ 
such that\footnote{Here, by convention: $ \bigcap_{i\in\emptyset}V_i= \bigcap_{j\notin [n]}(X\setminus V_j)=X.$}  
\begin{equation} \label{eq:convexcode}
\C= 
\{\sigma\subseteq [n] \,\, \vert \,\, ( \cap_{i\in\sigma} V_i  )\cap \cap_{j\notin\sigma} \left(X\setminus V_j\right)  \neq \emptyset \}.
\end{equation}
Note that choosing  $V_i=H_i^+$ and $X=\R^m_{\geq 0}$ yields  eq. \eqref{eq:codedescription},  thus any feedforward code $\C(U,\theta)$    is convex.  Therefore  non-convex codes  can not be encoded by  a one-layer feedforward network. 
 
Perhaps the simplest example of a non-convex code is (\cite{neural-ring}, code B3)
\begin{equation} \label{eq:simpleexample} \C = 2^{[3]}  \setminus  \{ \{1,2,3\},  \{2,3\},\{1\}\}= \{ \{2\},\{3\}, \{1,2\},  \{1,3\}\}. 
\end{equation}
We  generalize this example by considering combinatorial codes that contain subcodes that are obstructions to convexity.  A {\it subcode} of a combinatorial code $\C\subset 2^{[n]}$ is a collection of patterns obtained by restricting codewords in $\C$ to neurons inside a given subset $\sigma\subset [n]$:
\begin{equation*}\C(\sigma) \od \{\nu=\sigma\cap \mu\,\,\vert \,\, \mu\in \C\}\subseteq 2^{\sigma }.
\end{equation*}
It is easy to see that any subcode of a convex code is convex. Thus, possessing a {\it non-convex}  subcode is an obstruction to being realizable as a feedforward code $\C(U,\theta).$

The following is a  generalization of the example \eqref{eq:simpleexample} and is a corollary of the more general \refT{convex}, which we prove in Section~\ref{section:convexcodes}.
\begin{thm}\label{T:subcode} Assume a combinatorial code $\C$ posesses a subcode $\C(\sigma)\subset 2^\sigma$ that satisfies the following conditions:
\begin{enumerate}
\item[(i)]  $\sigma\notin \C(\sigma),$
\item[(ii)]  there exists a non-empty subset $\nu\subset \sigma$  with $\vert \nu\vert \leq \vert \sigma\vert-2$ such that $\nu\notin \C(\sigma),$ 
\item[(iii)]  for every subset $\sigma^\prime\subset \sigma$ with $\vert \sigma^\prime\vert=\vert \sigma \vert -1$, 
 if $\sigma^\prime\supset \nu$ then $\sigma^\prime \in \C(\sigma),$ 

\end{enumerate}
then the code $\C$ is non-convex and thus can not be realized by a one-layer feedforward network.
\end{thm}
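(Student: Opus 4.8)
\emph{Proof sketch.} The plan is to reduce to the subcode and then derive a contradiction from the nerve lemma. Since any subcode of a convex code is convex, it is enough to show that $\C(\sigma)$ itself is not convex. After relabelling neurons we may assume $\sigma=[k]$ with $k=|\sigma|$, and we write $m=|\nu|$, so that $1\le m\le k-2$ and $[k]\setminus\nu$ has at least two elements. Suppose for contradiction that $\C(\sigma)$ is convex, realised by convex sets $\{V_i\}_{i\in[k]}$ inside some $X\subseteq\R^d$, and let $\Delta=\{\tau\subseteq[k]\ :\ \tau\subseteq\mu\text{ for some }\mu\in\C(\sigma)\}$ be the simplicial complex generated by $\C(\sigma)$. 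Unwinding \eqref{eq:convexcode}, a face $\tau$ lies in $\Delta$ exactly when $\bigcap_{i\in\tau}V_i\ne\emptyset$, and a pattern $\mu$ lies in $\C(\sigma)$ exactly when $\bigcap_{i\in\mu}V_i$ is not covered by $\{V_j\}_{j\in[k]\setminus\mu}$.

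The heart of the argument is to translate the three hypotheses into statements about $W\od\bigcap_{i\in\nu}V_i$. Because $\nu\ne\emptyset$, $W$ is an intersection of convex subsets of $\R^d$, hence convex. Condition (iii) says every $(k-1)$-subset of $[k]$ containing $\nu$ is a codeword; in particular $[k]\setminus\{l\}\in\C(\sigma)$ for any $l\in[k]\setminus\nu$, so $W\supseteq\bigcap_{i\in[k]\setminus\{l\}}V_i\ne\emptyset$ and $W$ is therefore contractible. Condition (ii), $\nu\notin\C(\sigma)$, says exactly that $W$ is covered by $\{V_l\}_{l\in[k]\setminus\nu}$, i.e.
\[
W=\bigcup_{l\in[k]\setminus\nu}\bigl(W\cap V_l\bigr).
\]
Finally condition (i), $[k]\notin\C(\sigma)$, together with the fact that $[k]$ is the maximal possible pattern, forces $\bigcap_{i\in[k]}V_i=\emptyset$, i.e. $\bigcap_{l\in[k]\setminus\nu}(W\cap V_l)=\emptyset$.

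Now I would apply the nerve lemma to the finite cover $\{W\cap V_l\}_{l\in[k]\setminus\nu}$ of the convex set $W$. Each $W\cap V_l$ is convex, and every intersection among these sets equals $\bigcap_{i\in\nu\cup\tau}V_i$ for the appropriate $\tau\subseteq[k]\setminus\nu$, hence is convex, so either empty or contractible; thus this is a good cover, and its nerve $\mathcal{N}$ is homotopy equivalent to $W$, hence to a point. On the other hand $\mathcal{N}=\{\tau\subseteq[k]\setminus\nu\ :\ \nu\cup\tau\in\Delta\}$, a subcomplex of the full simplex on vertex set $[k]\setminus\nu$. By condition (iii), for every $\tau\subseteq[k]\setminus\nu$ with $|\tau|=k-m-1$ the set $\nu\cup\tau$ is a $(k-1)$-subset of $[k]$ containing $\nu$, hence a codeword, so $\tau\in\mathcal{N}$; since $\mathcal{N}$ is a simplicial complex it must then contain every proper face of that simplex. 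But $\bigcap_{l\in[k]\setminus\nu}(W\cap V_l)=\emptyset$ from the previous paragraph, so $[k]\setminus\nu\notin\mathcal{N}$. Hence $\mathcal{N}$ is precisely the boundary of a $(k-m-1)$-simplex, i.e. a sphere $S^{k-m-2}$ with $k-m-2\ge 0$, which is never contractible. This contradicts $\mathcal{N}\simeq\mathrm{pt}$, so $\C(\sigma)$, and therefore $\C$, is non-convex.

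The step I expect to be the real obstacle — and the reason the hypotheses take the form above — is ensuring that the contractibility input to the nerve lemma is legitimate when the realising sets are only required to lie in an arbitrary $X\subseteq\R^d$ rather than in $\R^d$ itself. This is exactly why $\nu$ must be nonempty: it is $W=\bigcap_{i\in\nu}V_i$, a genuine intersection of convex sets and hence convex in $\R^d$ regardless of $X$, that has to be contractible, whereas for $\nu=\emptyset$ one would be forced to work with $W=X$, which carries no convexity and the argument collapses. The remaining points are routine: checking the dictionary between \eqref{eq:convexcode} and the covering statements above, and invoking whichever form of the nerve lemma one prefers for finite good covers by convex sets. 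In the paper this argument is subsumed by \refT{convex}.
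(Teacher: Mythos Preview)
Your proof is correct and follows essentially the same route as the paper: you inline the nerve-lemma argument that the paper packages as \refT{convex} (showing $\Delta(\C(\sigma))_{|\nu}$ must be contractible for convex $\C(\sigma)$), and then identify that localization with the boundary of the simplex on $\sigma\setminus\nu$ exactly as the paper does. Your explicit verification that condition (i) forces $\bigcap_{i\in[k]}V_i=\emptyset$ and condition (ii) forces the $W\cap V_l$ to cover $W$ is precisely the content of the paper's one-line appeal to the nerve lemma in the proof of \refT{convex}.
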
 

While the subcodes in \refT{subcode} provide a large repertoire of  codes that can be used to rule out a feedforward code,   these codes are  not robust to noise.  
Specifically, a non-convex code may differ from a convex code by as little as one neuron's participation in a single codeword.
These codes are thus nearly impossible  to distinguish from other codes that share the same set of maximal patterns in the presence of noise.  

We say that a codeword $\sigma\in \C$ is a {\it maximal pattern} of the code $\C$ if it is not contained in any larger pattern, and denote the set of all maximal patterns of $\C$ as 
$$\max (\C)=\{\sigma\in \C \;\vert \; \sigma^\prime \supsetneq \sigma \Rightarrow \sigma^\prime \notin \C \}.$$ 
We refer to $\max(\C)$ as a {\it coarse combinatorial code}. 

The coarse code $\max(\C)$ is more robust than the full combinatorial code $\C$ to ``missing'' a small number of spikes.  Experimentally observed neural activity is often  {\it sparse} \cite{Hromadka:2008,Barth:2012}, i.e.  the number of  co-active neurons in a codeword $\sigma$ is bounded as $ \vert \sigma\vert \leq s n$, with the fraction  of active neurons often as low as $s<0.1$.  Two different sparse combinatorial codes are thus likely to possess distinct sets of maximal patterns. 

We now investigate the possibility of ruling out a one-layer feedforward network based on the coarse combinatorial code $\max(C)$.  
Our central result  is that this cannot be done.  Surprisingly, one-layer  feedforward networks may encode any prescribed set of maximal patterns. 

  \begin{thm}[No-go Theorem] \label{T:ff_theorem}  For every collection $\mathcal M \subset 2^{[n]}$ of maximal patterns  there exists a one-layer  feedforward network of perceptrons \eqref{eq:network:rate:model}   with  $$\max(\C(U,\theta)) = \mathcal M.$$
\end{thm}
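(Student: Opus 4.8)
The plan is to avoid the topological machinery altogether and simply write down an explicit network, exploiting the fact that only the \emph{maximal} codewords are being prescribed. Since a collection of maximal patterns is by definition an antichain (and nonempty, because $\emptyset$ is always a codeword), write $\mathcal{M}=\{\sigma_1,\dots,\sigma_k\}$ with no $\sigma_a$ containing another, and let $\Delta(\mathcal{M})=\{\tau\subseteq[n]\;|\;\tau\subseteq\sigma_a\text{ for some }a\}$ be the simplicial complex it generates, whose facets are exactly $\mathcal{M}$. The first step is a purely combinatorial reduction: if a feedforward code satisfies $\mathcal{M}\subseteq\C(U,\theta)\subseteq\Delta(\mathcal{M})$, then every codeword lies in some $\sigma_a$, so any maximal codeword equals some $\sigma_a$ by the antichain property, while each $\sigma_a$ is itself maximal in $\C(U,\theta)$ (a codeword containing $\sigma_a$ lies in some $\sigma_b$, forcing $a=b$); hence $\max\C(U,\theta)=\mathcal{M}$. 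So it suffices to produce a network whose code is sandwiched between $\mathcal{M}$ and $\Delta(\mathcal{M})$.

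To build one, I would use $m=k$ input neurons --- one per maximal pattern --- and set $U_{ia}=1$ if $i\in\sigma_a$ and $U_{ia}=-M$ if $i\notin\sigma_a$, where $M:=n+1$, with all thresholds $\theta_i=1$ (only $\theta_i>0$ will be used). Since the combinatorial code depends only on $U$ and $\theta$, cf.~\eqref{eq:codedescription}, any admissible transfer functions \eqref{eq:phi} (for instance rectified-linear, $\phi_i(t)=\max(t,0)$) may be used. The point of the weights is that the ray $y=c\,e_a$ with $c>1$ ``selects'' the pattern $\sigma_a$: a neuron with $i\in\sigma_a$ receives net input $c-1>0$ and fires, while one with $i\notin\sigma_a$ receives net input $-Mc-1<0$ and is silent, so $\Supp(x(c\,e_a))=\sigma_a$. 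This gives the lower containment $\mathcal{M}\subseteq\C(U,\theta)$.

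For the reverse containment, I would fix an arbitrary $y\in\R^k_{\geq0}$, set $\mu=\Supp(x(y))$, and argue by contradiction that $\mu\subseteq\sigma_a$ for some $a$ (the case $\mu=\emptyset$ being trivial). If instead $\mu\nsubseteq\sigma_a$ for every $a$, then summing the firing inequalities $\sum_{a}U_{ia}y_a>\theta_i=1$ over $i\in\mu$ yields $\sum_{a}y_a\bigl(|\mu\cap\sigma_a|-M|\mu\setminus\sigma_a|\bigr)>|\mu|\geq1$; but $|\mu\setminus\sigma_a|\geq1$ and $|\mu\cap\sigma_a|\leq n-1$ for every $a$, so each coefficient of $y_a$ is at most $(n-1)-M=-2<0$, forcing the left side to be $\leq0$ --- a contradiction. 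Hence $\mu\in\Delta(\mathcal{M})$, i.e. $\C(U,\theta)\subseteq\Delta(\mathcal{M})$, and the reduction of the first paragraph finishes the proof.

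Because the construction is this short, there is no single hard computation; the one place that genuinely requires care --- and where a naive choice of weights breaks down --- is the inclusion $\C(U,\theta)\subseteq\Delta(\mathcal{M})$. With $\pm1$ weights one can create spurious codewords not contained in any $\sigma_a$ by driving several input axes simultaneously: e.g. for $\mathcal{M}=\{\{1,2\},\{1,3\},\{2,3\}\}$ the all-ones input would activate all three neurons, producing the unwanted maximal pattern $\{1,2,3\}$. Taking the off-pattern weights equal to a large $-M$ relative to $n$ is exactly what the summation estimate above needs in order to rule this out. I would also remark that the code produced need not equal $\Delta(\mathcal{M})$ --- some non-maximal faces may be missing --- which is immaterial since only $\max\C(U,\theta)$ is constrained, and that the degenerate case $\mathcal{M}=\{\emptyset\}$ is handled automatically by the same construction, yielding the all-silent network.
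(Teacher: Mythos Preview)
Your proof is correct and takes a more direct route than the paper. The paper proceeds in two stages: first showing (Proposition~\ref{P:construction}) that for any simplicial complex $\Delta$ the explicit network with $U_{ia}=1$ for $a\in\alpha_i$ and $-n$ otherwise has $\Delta(\C(U,\theta))=\nerve(\Delta)$, and then invoking Gr\"unbaum's theorem that every complex arises as the nerve of some other complex $\tilde\Delta$, to which the construction is applied. You bypass the inverse-nerve step entirely by taking one input neuron per facet $\sigma_a\in\mathcal M$ and setting $U_{ia}=1$ if $i\in\sigma_a$ and $-(n{+}1)$ otherwise, which sandwiches $\C(U,\theta)$ between $\mathcal M$ and $\Delta(\mathcal M)$ directly; the summation estimate you use for the upper inclusion is the same averaging device as in the proof of Proposition~\ref{P:construction}, but applied to the \emph{transposed} incidence matrix (rows indexed by vertices of $\Delta(\mathcal M)$, columns by its facets, rather than the reverse). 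Your argument is shorter, self-contained, avoids citing Gr\"unbaum, and uses exactly $|\mathcal M|$ input neurons, whereas the paper's route through the inverse nerve may introduce extra vertices whenever some $i\in[n]$ fails to be an intersection of facets. The paper's nerve formulation does make Corollary~\ref{C:topological} more visibly immediate, though your construction yields it just as well, since $\Delta(\C(U,\theta))=\Delta(\mathcal M)$ is precisely the nerve of the cover $\{H_i^+\}$ of $\R^k_{\geq 0}\setminus\mathcal P$.
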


\noindent In particular, it is not possible to infer a computational function for recurrent connections, or hidden layers, in a network from observations of the coarse combinatorial code alone. 
 The proof of Theorem~\ref{T:ff_theorem} (see Section~\ref{sec:no-go-proof}) is constructive, and uses tools from combinatorial topology.  An unexpected corollary is the following topological fact:

\begin{cor}
\label{C:topological}
For any finite abstract simplicial complex\footnote{See Section \ref{s:proofs}.} $\Delta$, there exists a subset of the form $\RR^m_{\geq 0} \setminus \mathcal{P}$, for $\mathcal{P}$ a polyhedron, which is homotopy equivalent to $\Delta$.
\end{cor}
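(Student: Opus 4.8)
The plan is to extract the corollary directly from the geometry underlying \refT{ff_theorem}. Given a finite abstract simplicial complex $\Delta$, relabel its vertices so that its vertex set is $[n]$, and let $\mathcal M = \max(\Delta)$ be the set of inclusion-maximal faces (facets) of $\Delta$; since no facet contains another, $\mathcal M \subset 2^{[n]}$ is a legitimate collection of maximal patterns. Applying \refT{ff_theorem} yields a one-layer feedforward network \eqref{eq:network:rate:model} with $\max(\C(U,\theta)) = \mathcal M$, whose code, by \eqref{eq:codedescription}, is cut out by the convex sets $H_i^+ = \{y \in \R^m_{\geq 0} \mid (Uy)_i > \theta_i\}$ inside the orthant.

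Next I would set $\mathcal P \od \bigcap_{i=1}^n H_i^- = \{y \in \R^m_{\geq 0} \mid (Uy)_i \le \theta_i \text{ for all } i\}$, an intersection of finitely many closed half-spaces with the closed orthant and hence a polyhedron; by construction $\R^m_{\geq 0} \setminus \mathcal P = \bigcup_{i=1}^n H_i^+$, so it remains to show $\bigcup_i H_i^+ \simeq \Delta$. Each $H_i^+$ is the intersection of an open half-space with $\R^m_{\geq 0}$, hence open in the orthant and convex, and every nonempty intersection $\bigcap_{i\in\sigma} H_i^+$ is convex and therefore contractible; thus $\{H_i^+\}_{i=1}^n$ is a good open cover of the (metrizable, hence paracompact) space $\bigcup_i H_i^+$, and the nerve lemma gives $\bigcup_i H_i^+ \simeq \nerve(\{H_i^+\}_{i=1}^n)$. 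Finally I would identify this nerve with $\Delta$: for $\sigma \subseteq [n]$, the set $\bigcap_{i\in\sigma} H_i^+$ is nonempty precisely when $\sigma \subseteq \Supp(x(y))$ for some input $y$, i.e.\ precisely when $\sigma$ is contained in some codeword of $\C(U,\theta)$, i.e.\ precisely when $\sigma$ lies in some element of $\max(\C(U,\theta)) = \mathcal M$ --- which is exactly the condition $\sigma \in \Delta$. Hence $\nerve(\{H_i^+\}_{i=1}^n) = \Delta$ and $\R^m_{\geq 0} \setminus \mathcal P \simeq \Delta$.

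The bulk of this argument is bookkeeping, so I expect the two points that actually need care to be: (a) invoking the right version of the nerve lemma --- the open-cover form for a paracompact space, applied here to an open subset of $\R^m_{\geq 0}$ rather than to a compact polytope, and checking that relative openness together with convexity of the $H_i^+$ really do yield a good cover; and (b) getting the final identification exactly right, so that the nerve of the cover $\{H_i^+\}$ (as opposed to some complex built from the full list of regions or codewords) reproduces $\Delta$ on the nose, including at the level of vertices --- which is where the choices $\mathcal M = \max(\Delta)$ and ``vertex set $= [n]$'' are used.
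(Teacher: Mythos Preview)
Your proposal is correct and follows essentially the same route as the paper: apply \refT{ff_theorem} to obtain $(U,\theta)$ with $\max(\C(U,\theta))=\max(\Delta)$, set $\mathcal{P}=\bigcap_i H_i^-$, observe that $\{H_i^+\}$ is a convex (hence good) cover of $\R^m_{\ge 0}\setminus\mathcal{P}$ whose nerve is $\Delta(\C(U,\theta))=\Delta$, and conclude via the nerve lemma. Your added care about the open-cover/paracompactness hypotheses and the vertex-level identification is a welcome sharpening but not a departure from the paper's argument.
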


This is perhaps counterintuitive because if the polyhedron is fully-contained within $\RR^m_{>0}$, the  complement has the homotopy type of a sphere.  The richness in the topology emerges from the intersection of the polyhedron with the boundary of the positive orthant. 
  
  \bigskip 
An important caveat for interpreting our ``no-go'' theorem is that neurons in the brain typically possess a strong constraint called {\it Dale's law} \cite{DayanAbbott}. Dale's law states that neurons either have purely excitatory or purely inhibitory synapses onto other neurons\footnote{One notable exception is the gap junctions.}. More formally, a one-layer  feedforward network  \eqref{eq:network:rate:model}  {\em respects} Dale's Law if there is a partition of the columns of the synaptic matrix  into two families: $U=[U_{+}\mid U_{-}]$ so that all entries of $U_+$ are non-negative, and those of $U_-$ are non-positive.
It turns out that one-layer  networks that respect Dale's law are capable of producing only an extremely restricted class of coarse combinatorial codes.

\begin{prop} \label{P:contractible} Suppose that the one-layer feedforward network \eqref{eq:network:rate:model} respects Dale's Law. Then the combinatorial code of this network has exactly one maximal pattern  $\sigma_{\max}\in \C(U, \theta).$  I.e.,
$$ \max(\C(U, \theta))=\{\sigma_{\max}\}.$$
\end{prop}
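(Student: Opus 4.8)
The plan is to prove something slightly stronger than the statement: that the code $\C(U,\theta)$ has a \emph{largest} codeword under inclusion, i.e.\ there is $\sigma_{\max}\in\C(U,\theta)$ with $\sigma\subseteq\sigma_{\max}$ for every $\sigma\in\C(U,\theta)$. This immediately yields the proposition, since a largest codeword is automatically the unique maximal pattern: it is not properly contained in any codeword, and every other codeword is strictly smaller, hence not maximal, so $\max(\C(U,\theta))=\{\sigma_{\max}\}$. The obvious candidate is $S\od\bigcup_{\sigma\in\C(U,\theta)}\sigma=\{\,i\in[n]\mid (Uy)_i>\theta_i\text{ for some }y\in\R^m_{\geq0}\,\}$, the set of neurons that fire for \emph{some} input. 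By \eqref{eq:phi} and \eqref{eq:codedescription} it then suffices to exhibit a single input $y^{\ast}\in\R^m_{\geq0}$ with $(Uy^{\ast})_i>\theta_i$ for all $i\in S$ simultaneously, since such a $y^\ast$ has $\Supp(x(y^\ast))=S$.

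The construction of $y^{\ast}$ is where Dale's law enters. Write $y=(y^{+},y^{-})$ according to the partition $U=[U_{+}\mid U_{-}]$. First observe that zeroing out the inhibitory inputs can only help: for any $y\in\R^m_{\geq0}$ and any neuron $i$, the vector $\tilde y\od(y^{+},0)$ satisfies $(U\tilde y)_i=\sum_{a\in+}U_{ia}y^{+}_a\geq (Uy)_i$, because the deleted terms $\sum_{a\in-}U_{ia}y^{-}_a$ form a sum of products of a non-positive weight with a non-negative rate, hence are $\leq 0$. Therefore, for each $i\in S$ we may choose a witness $y^{(i)}\in\R^m_{\geq0}$ with $(Uy^{(i)})_i>\theta_i$ and replace it by $\tilde y^{(i)}\od\bigl((y^{(i)})^{+},0\bigr)$, which still satisfies $(U\tilde y^{(i)})_i>\theta_i$ and moreover has $(U\tilde y^{(i)})_j=\sum_{a\in+}U_{ja}(y^{(i)})^{+}_a\geq 0$ for \emph{every} $j\in[n]$, since every entry of $U_{+}$ and every coordinate of $\tilde y^{(i)}$ is non-negative.

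Now set $y^{\ast}\od\sum_{i\in S}\tilde y^{(i)}\in\R^m_{\geq0}$. For a fixed $j\in S$, linearity gives $(Uy^{\ast})_j=\sum_{i\in S}(U\tilde y^{(i)})_j$, a sum in which every term is $\geq 0$ and the $i=j$ term is $>\theta_j$; hence $(Uy^{\ast})_j>\theta_j$, so $j\in\Supp(x(y^{\ast}))$. Thus $\Supp(x(y^{\ast}))\supseteq S$, while $\Supp(x(y^{\ast}))\subseteq S$ holds by the very definition of $S$, so $\Supp(x(y^{\ast}))=S$. This shows $S\in\C(U,\theta)$ and that it contains every codeword, completing the argument. (When $S=\varnothing$ the empty sum gives $y^\ast=0$ and $\Supp(x(0))=\varnothing$, consistent with the claim; note $\varnothing\in\C(U,\theta)$ always, using $\theta_i>0$ and \eqref{eq:phi}.)

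I do not anticipate a real obstacle: the only care needed is in the sign bookkeeping (non-positive weights times non-negative rates give non-positive contributions, which is exactly what makes deleting inhibition monotone and what makes the witnesses superpose additively). The conceptual point is simply that, under Dale's law, one can make every ``ever-active'' neuron fire at once by driving all excitatory inputs hard and switching the inhibitory inputs off, so the code has a top element and hence a single maximal pattern.
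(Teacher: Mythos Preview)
Your proof is correct and follows essentially the same approach as the paper: both arguments observe that silencing the inhibitory inputs can only increase each neuron's net drive, and then build a single input in which every ``ever-active'' neuron fires simultaneously. The paper constructs this input directly by setting each excitatory rate large enough (and identifies $\sigma_{\max}$ explicitly as the set of output neurons receiving at least one positive weight), whereas you obtain the same input as a superposition of zeroed-inhibition witnesses; the underlying idea and the key sign observations are identical.
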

\noindent  The proof is given in Section \ref{sec:DaleProofs}. This property, in particular, excludes most known sparse neural codes, such as place field codes or orientation tuning codes. 

It is worth noting, however, that  a one-layer feedforward network without the Dale's law constraint can be thought of as an approximation of a two-layer feedforward network that respects Dale's law.  For example, assuming  $\phi_i(y)=[y]_+\od \max(0,y),$ that  the first layer synapses $U^1$ are excitatory while the second layer synapes $U^2$ respect Dale's law, and moreover that the first layer  has zero thresholds,  one obtains the following equation for the firing rates $x$ of the output neurons: $$ x=\left[U^2\left[U^1y\right]_+-\theta\right]_+= \left[U^2 U^1y -\theta\right]_+= \left[Uy -\theta\right]_+.$$  Observe that the  resulting matrix $U=U^2U^1$ no longer has to respect Dale's law, even though the component matrices $U^1$ and $U^2$ do so. Thus two-layer feedforward networks that obey Dale's law  are capable of encoding any prescribed simplicial complex.  I.e. in the presence of a  layer of inhibitory  neurons a feedforward network can encode topologically interesting stimuli.

\section{Conclusions}
 
 Motivated by the idea that recurrent or many-layer feedforward architecture may be necessary to shape the neural code, we set out to find combinatorial codes that {\it cannot} arise from one-layer feedforward networks.  Although we found a large class of such examples, the non-convex codes, 
we also observed that in the presence of noise they would be virtually indistinguishable from other codes having the same maximal patterns.  When we considered coarse combinatorial codes, which keep track of only the maximal patterns, we found that there do not exist any codes that cannot be realized by a one-layer feedforward network. Our results suggest that recurrent connections, or multiple layers, are not necessary for shaping the neural code.

\section{Proofs of  the main results} \label{s:proofs}

In order to understand the coarse neural codes $\max (\C)$ it is convenient to consider the maximal possible code with the same $\max( \C).$ 
This can be thought of as a ``completion'' of the code $\C$ obtained by  adding all the   subsets  of $\max(\C) $;   this results in  a new code  $\Delta(\C)\supseteq \C,$
\begin{equation} \Delta(\C)=\{ \nu\subseteq \sigma\;\vert \;\sigma \in \C\}=\{\nu \subseteq \sigma\;\vert \;\sigma\in \max (\C)\}.
\end{equation}
This collection of sets is closed under inclusion, i.e. $\nu\subset \sigma\in \C$ implies that $\nu\in \C.$ 
A collection of sets with this property  is called  an {\it abstract simplicial complex}. 

\subsection{Convex codes}\label{section:convexcodes}

It is well known that  every abstract simplicial complex, i.e. a code that satisfies $\C=\Delta(\C),$   is  a convex code \cite{Wegner1967,Tancer2013}.  The following result shows that convex codes that are {\it not} simplicial complexes also have strong restrictions.

\begin{thm}\label{T:convex}Assume that $\C\subset 2^{[n]}$ is a convex code of the form \eqref{eq:convexcode} and that a codeword $\nu\in \Delta(\C)$ violates the simplicial complex property, i.e. $\nu\notin\C.$ Then, the localized complex 
\begin{equation}\label{eq:localizeddelta} \Delta(\C)_{\vert \nu }\od\left \{ \tau \subseteq  [n]\setminus \nu  \; \vert \; \left(\tau\cup \nu \right)\in \Delta(\C)\right\}\subseteq 2^{ [n]\setminus \nu }
\end{equation}
is contractible. 
\end{thm}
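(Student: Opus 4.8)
The plan is to identify both $\Delta(\C)$ and its localization $\Delta(\C)_{\vert\nu}$ as nerves of finite families of convex sets, and then read off their homotopy types from the nerve lemma. First I would unwind the definition of $\Delta(\C)$. Since $\sigma\in\C$ exactly when some point of $X$ lies in $V_i$ for every $i\in\sigma$ and in no $V_j$ with $j\notin\sigma$, a subset $\mu\subseteq[n]$ lies in $\Delta(\C)$ if and only if $\bigcap_{i\in\mu}V_i\neq\emptyset$; that is, $\Delta(\C)=\nerve(\{V_i\}_{i=1}^n)$. Setting $W\od\bigcap_{i\in\nu}V_i$ --- a convex set, and nonempty because $\nu\in\Delta(\C)$ --- the identity $\bigcap_{i\in\tau\cup\nu}V_i=\bigcap_{i\in\tau}(V_i\cap W)$ (both sides equal to $W$ when $\tau=\emptyset$) shows that $\tau\cup\nu\in\Delta(\C)$ iff $\bigcap_{i\in\tau}(V_i\cap W)\neq\emptyset$. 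Hence $\Delta(\C)_{\vert\nu}=\nerve\bigl(\{V_i\cap W\}_{i\in[n]\setminus\nu}\bigr)$, once again a nerve of finitely many convex sets, now sitting inside $W$.

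Next I would apply the nerve lemma: the nerve of a finite family of convex sets is homotopy equivalent to their union (all nonempty finite intersections being convex, hence contractible, so the family is a good cover of its union). This yields
\[
\Delta(\C)_{\vert\nu}\;\simeq\;\bigcup_{i\in[n]\setminus\nu}(V_i\cap W)\;=\;W\cap\left(\bigcup_{i\in[n]\setminus\nu}V_i\right).
\]
Now the hypothesis $\nu\notin\C$ does the work: by the description \eqref{eq:convexcode} of $\C$, membership $\nu\in\C$ would require a point of $W=\bigcap_{i\in\nu}V_i$ avoiding every $V_j$ with $j\notin\nu$, so $\nu\notin\C$ says precisely that $W\subseteq\bigcup_{j\in[n]\setminus\nu}V_j$. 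Therefore the union above is all of $W$, and $\Delta(\C)_{\vert\nu}$ is homotopy equivalent to the nonempty convex set $W$, hence contractible. (This incidentally shows $\Delta(\C)_{\vert\nu}$ is nonempty as a complex: since $W\neq\emptyset$ is covered by the $V_j$, $j\notin\nu$, at least one $V_i\cap W$ is nonempty, so the nerve has a vertex.)

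The step I expect to need the most care is the invocation of the nerve lemma, since its classical forms presuppose an open, or suitably regular closed, cover, whereas the $V_i$ are only assumed convex; this is exactly the point where one uses the nerve theorem in the form valid for finite families of convex sets, the relevant good-cover hypothesis being automatic because intersections of convex sets are convex. Beyond that, the only thing to verify is that nothing in the argument uses $\nu\neq\emptyset$ or restricts $\vert\nu\vert$: the sole requirement is $W\neq\emptyset$, which is precisely the standing hypothesis $\nu\in\Delta(\C)$.
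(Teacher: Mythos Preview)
Your argument is correct and is essentially the same as the paper's: with $W=\bigcap_{i\in\nu}V_i$ (the paper's $V_\nu$), you identify $\Delta(\C)_{\vert\nu}$ as the nerve of the convex cover $\{V_i\cap W\}_{i\notin\nu}$ of $W$, use $\nu\notin\C$ to verify it is indeed a cover, and invoke the nerve lemma. The paper's proof is the same, only more tersely stated; your additional remarks on why $\Delta(\C)=\nerve(\{V_i\})$ and on the good-cover hypothesis for convex (rather than open) families are helpful elaborations rather than a different approach.
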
 
\proof Denote by $V_\nu=\cap_{j\in \nu} V_i.$ It is easy to see that $ \Delta(\C)_{\vert \nu }$  is the nerve of the cover of $V_\nu$ by the convex sets $\tilde V_j \od V_j\cap V_\nu.$  Moreover, since $\nu\notin \C,$ the sets $\tilde V_j $ cover $V_\nu,$ i.e. $V_\nu =\cup_{j\notin \nu} \tilde V_j.$ Therefore we can use the nerve lemma (see e.g. \cite{Hatcher}, Corollary 4G.3 p. 460) and conclude that the simplicial complex  $ \Delta(\C)_{\vert \nu }$ is homotopy equivalent to the set  $V_\nu$ and thus contractible.
\qed

\medskip 
\noindent We now give the proof of \refT{subcode} as a corollary of  \refT{convex}.
\begin{proof}[Proof of \refT{subcode}]
 Given the conditions (i)-(iii) of \refT{subcode}, it is easy to see  that the maximal patterns   
 of  the localization  $\Delta(\C(\sigma))_{\vert \nu}$ 
 can be described as 
\begin{equation*}
\max(\Delta(\C(\sigma))_{\vert \nu}) =\left \{\tau \subset \left(\sigma \setminus \nu\right)\; \vert\;    \;\vert\tau\vert = \vert\sigma\vert - \vert \nu\vert - 1 \right\}.
\end{equation*} 
Therefore the simplicial complex $\Delta(\C(\sigma))_{\vert \nu}$  can be identified with the boundary of the $(\vert \sigma \vert - \vert \nu \vert -1)$-dimensional simplex $\sigma \setminus \nu$, and thus is not contractible. Therefore $\C(\sigma)$ is not a convex  code and thus neither  is $\C.$
\end{proof}

\subsection{Proof of the ``no-go''  Theorem \ref{T:ff_theorem}.} \label{sec:no-go-proof}
We prove the no-go theorem in  two steps. First we find an explicit construction of a feedforward network that encodes the  {\it nerve}  $\nerve(\Delta)$ of any  simplicial complex $\Delta.$ We then use a result by  Branko Gr{\"u}nbaum   \cite{Grunbaum} showing the existence of an inverse nerve.

Recall that a face $\alpha\in \Delta$ of a simplicial complex is {\it maximal} if it is not contained in any larger face  of $\Delta .$   
 Let $\Delta\subset 2^{[m]}$ be an abstract simplicial complex on $m$ vertices and  $ \{\alpha_1, \dots, \alpha_n\}=\max(\Delta)$  be the maximal sets of $\Delta $.    The {\it nerve} of $\Delta $ is another abstract simplicial complex $\nerve(\Delta )\subseteq 2^{[n]}$ on $n$ vertices, such that for any nonempty $\nu\in [n]$ 
\begin{equation*} \nu\in \nerve(\Delta ) \iff \bigcap\limits_{i\in \nu} \alpha_i \neq \emptyset.
\end{equation*}

\begin{prop}\label{P:construction} Let $\Delta \subseteq 2^{[m]}$ be a simplicial complex with   $m$ vertices 
and  $n$ maximal faces $ \{\alpha_1, \dots, \alpha_n\}.$  Then there exists a strictly feedforward network $(U,\theta)$ such that 
$$ \Delta(\C(U,\theta))=\nerve(\Delta). $$
Moreover, the network parameters $U$ and $\theta$ for  such a network   can be constructed with 
\begin{equation}\label{eq:explicitconstruction}
U_{i,a} = \begin{cases} -n & \text{if } a \not\in \alpha_i,\\
1 & \text{if } a \in \alpha_i,
\end{cases}
\qquad \text{ and } \,\,
 \theta_i= \frac12. 
\end{equation}
\end{prop}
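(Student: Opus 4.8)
The plan is to verify the explicit formula \eqref{eq:explicitconstruction} directly. Both $\nerve(\Delta)$ and $\Delta(\C(U,\theta))$ are abstract simplicial complexes on $[n]$, so it suffices to check that they have the same faces. First I would reduce to a statement about the positive halfspaces $H_i^+$ alone: using \eqref{eq:codedescription} together with the definition of $\Supp$, a nonempty set $\nu\subseteq[n]$ lies in $\Delta(\C(U,\theta))$ if and only if $\bigcap_{i\in\nu}H_i^+\neq\emptyset$. Indeed, any $y$ in that intersection has $\Supp(x(y))\supseteq\nu$, so $\nu$ sits below a codeword; conversely, if $\nu\subseteq\sigma$ for some $\sigma\in\C(U,\theta)$, then any input $y$ realizing $\sigma$ lies in every $H_i^+$ with $i\in\nu$. (The empty face is present in both complexes by the stated conventions.) So everything reduces to the equivalence
\[
\bigcap_{i\in\nu}H_i^+\neq\emptyset \quad\Longleftrightarrow\quad \bigcap_{i\in\nu}\alpha_i\neq\emptyset,
\]
where $H_i^+=\{\,y\in\RR^m_{\geq0} : \sum_{a\in\alpha_i}y_a-n\sum_{a\notin\alpha_i}y_a>\tfrac12\,\}$ according to \eqref{eq:explicitconstruction}.

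For ``$\Leftarrow$'' I would exhibit a single input: if $a^\ast\in\bigcap_{i\in\nu}\alpha_i$, take $y=\e_{a^\ast}$, the $a^\ast$-th standard basis vector; then for each $i\in\nu$ we get $(Uy)_i=U_{i,a^\ast}=1>\tfrac12$, so $y\in\bigcap_{i\in\nu}H_i^+$. For ``$\Rightarrow$'' I would argue by contraposition: assume $\bigcap_{i\in\nu}\alpha_i=\emptyset$ and show the corresponding intersection of halfspaces is empty. Given $y\in\RR^m_{\geq0}$, sum the defining inequalities over $i\in\nu$; writing $k_a=\#\{i\in\nu : a\in\alpha_i\}$, the left-hand side becomes $\sum_{a=1}^m y_a\big((1+n)k_a-n|\nu|\big)$. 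Since $\bigcap_{i\in\nu}\alpha_i=\emptyset$ forces $k_a\le|\nu|-1$ for every $a$, and since $|\nu|\le n$, each coefficient satisfies $(1+n)k_a-n|\nu|\le(1+n)(|\nu|-1)-n|\nu|=|\nu|-1-n\le-1$, so the sum is at most $-\sum_a y_a\le0$ and in particular cannot exceed $|\nu|/2>0$. Hence $y\notin\bigcap_{i\in\nu}H_i^+$. Combining the two directions with the definition of $\nerve(\Delta)$ yields $\Delta(\C(U,\theta))=\nerve(\Delta)$.

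Since the weights and threshold are handed to us, there is no genuine obstacle here — the work is a short verification — and the only point that needs care is the quantitative choice of constants. The weight $-n$ must be large enough in absolute value that the presence of a single ``forbidden'' coordinate $a\notin\alpha_i$ in the support of $y$ outweighs all positive contributions for at least one $i\in\nu$; this is precisely what the bound $(1+n)k_a-n|\nu|\le-1$ encodes, and it uses the crude but sufficient estimate $|\nu|\le n$. The threshold $\tfrac12$ plays the complementary role of lying strictly between $0$ and the value $1$ produced by one ``allowed'' coordinate. Any pair of weights and threshold with these two properties would serve equally well; the stated choice is simply the cleanest.
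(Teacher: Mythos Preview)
Your proof is correct and follows essentially the same route as the paper: both reduce to the equivalence $\bigcap_{i\in\nu}H_i^+\neq\emptyset \iff \bigcap_{i\in\nu}\alpha_i\neq\emptyset$, exhibit an explicit witness for ``$\Leftarrow$'' (you use a single basis vector $\e_{a^\ast}$, the paper uses the indicator of the full intersection $\alpha(\nu)$, but either works), and for ``$\Rightarrow$'' sum the defining inequalities over $i\in\nu$ and use $k_a\le|\nu|-1$ together with $|\nu|\le n$ to reach a contradiction. Your bookkeeping with the coefficients $(1+n)k_a-n|\nu|$ is just a repackaging of the paper's estimate $|\tau|-1-n<0$.
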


\begin{proof} Given the choice   \eqref{eq:explicitconstruction}, the  regions $  H^+_i \subset \mathbb R ^m _{\geq 0}$ can be then described  by the inequalities $y_a \geq 0$ and 
\begin{equation}\label{eq:Ha}  \sum_{a\in \alpha_i}y_a-n \sum_{a\notin \alpha_i}y_a>\frac12.
\end{equation}
Note that equation \eqref{eq:codedescription} implies that 
$$ \Delta(\C(U,\theta))=\{\sigma\subseteq [n] \,\, \vert \,\,\bigcap_{i\in\sigma} H^+_i   \neq \emptyset\}\cup \emptyset.
$$
Thus,  to show that  $\Delta(\C(U,\theta))\supseteq \nerve(\Delta )$ we need to prove that for any non-empty 
 $ \sigma\in {\mathcal N}(\Delta),$ 
 \begin{equation}\label{eq:nonempty}
 \bigcap_{i\in\sigma} H^+_i \neq \emptyset.
\end{equation}
 For any non-empty $\alpha\subseteq [m]$ we define a vector $y^\alpha\in  \R^m_{\geq0}$ as 
 \begin{equation*} y^\alpha_a\stackrel{\operatorname{def}}{=} \begin{cases} 1 & \text{if } a  \in \alpha ,\\
0 & \text{if } a \notin \alpha.
\end{cases}
 \end{equation*} 
Let $\sigma \subseteq  [n]$ and define $\cap_{i\in\sigma}\alpha_i\od \alpha(\sigma)\subseteq [m].$ By the definition of the nerve, if  a nonempty  $\sigma \in \nerve(\Delta ),$ then  $\alpha(\sigma) \neq \emptyset$ and plugging in $y^{\alpha(\sigma)},$ the inequality \eqref{eq:Ha} becomes $\vert \alpha_i\vert>\frac12.$ Thus  $y^{\alpha(\sigma)}\in  \cap_{i\in\sigma} H^+_i $ and  the property  \eqref{eq:nonempty} holds.

\medskip
\noindent  In order to show that  $\Delta(\C(U,\theta))\subseteq \nerve(\Delta )$   we  need to prove  that  
 \begin{equation}\label{eq:empty}
 \text{for any } \tau\notin  {\mathcal N}(\Delta ),\qquad 
 \bigcap_{i\in\tau} H^+_i    = \emptyset.
\end{equation}
Assume the converse, i.e. for some $\tau\notin \nerve (\Delta)$ there  exists  $y\in  \bigcap_{i\in\tau} H^+_i.$
From the definition of the nerve we obtain  $ \alpha(\tau)=\emptyset$ and thus  for every $a\in [m]$ 
\begin{equation}\label{eq:tau} \left\vert \{ j\in\tau\;\vert\; a\in \alpha_j\}\right\vert \leq \vert\tau\vert -1,\quad \text{and} \quad  
                               \left\vert \{ j\in\tau\;\vert\; a\notin \alpha_j\}\right\vert \geq  1.
\end{equation}
Summing  the  inequalities  \eqref{eq:Ha} over all $j\in \tau$  we obtain 
\begin{equation*}\sum_{j\in\tau}\sum_{a\in\alpha_j}y_a-n \sum_{j\in\tau}\sum_{a\notin\alpha_j}y_a>\frac{\vert\tau\vert}{2},
\end{equation*}
 thus using \eqref{eq:tau} we obtain 
\begin{equation*}
\sum_{a=1}^m y_a \left(\vert\tau\vert-1-n\right) \geq \sum_{a=1}^m y_a\left( \left\vert \{ j\in\tau\;\vert\; a\in \alpha_j\}\right\vert -n\vert    \{ j\in\tau\;\vert\; a\notin \alpha_j\}    \vert\right)  
> \frac{\vert\tau\vert}{2}>0.
\end{equation*}
Taking into account  that  $y_a\geq 0$ for all $a,$  we obtain  that $\vert\tau\vert\geq n+1,$ which is a contradiction, thus the condition \eqref{eq:empty} holds.
\end{proof}

\medskip 

\noindent To finish the proof of Theorem \ref{T:ff_theorem} we use  the following classical result  of   Gr{\"u}nbaum.  (For completeness, we also give  Gr{\"u}nbaum's  explicit construction of the inverse nerve in the Appendix.) 
\begin{thm}[\cite{{Grunbaum}}] For every abstract simplicial complex $\Delta$ there exists a simplicial complex $ \tilde \Delta,$ such that $\nerve(\tilde\Delta)=\Delta.$
\end{thm}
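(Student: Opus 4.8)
The plan is to give a self-contained construction of an inverse nerve, rather than merely invoking Gr\"unbaum. Let $\Delta \subseteq 2^{[n]}$ be an abstract simplicial complex whose vertex set is all of $[n]$, so that $\{i\} \in \Delta$ for every $i$. I would take the vertex set of the new complex $\tilde\Delta$ to be the set $V$ of \emph{nonempty} faces of $\Delta$, and for each $i \in [n]$ set
\[
\alpha_i \od \{\tau \in V \;:\; i \in \tau\} \subseteq V,
\]
the collection of faces of $\Delta$ that contain the vertex $i$; then let $\tilde\Delta$ be the simplicial complex on $V$ generated by $\alpha_1,\dots,\alpha_n$, i.e. $\tilde\Delta = \{\beta \subseteq V : \beta \subseteq \alpha_i \text{ for some } i\}$.

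The first step is to verify that $\max(\tilde\Delta) = \{\alpha_1,\dots,\alpha_n\}$, with the $\alpha_i$ pairwise distinct, so that $\nerve(\tilde\Delta)$ is in fact a complex on the $n$ labels $[n]$. This is immediate: $\{i\} \in \alpha_i$ while $\{i\} \notin \alpha_j$ whenever $j \neq i$, so the $\alpha_i$ are pairwise incomparable under inclusion, and hence are exactly the maximal faces of the complex they generate.

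The core of the argument is the identity
\[
\bigcap_{i\in\nu}\alpha_i = \{\tau \in V : \nu \subseteq \tau\}, \qquad \emptyset \neq \nu \subseteq [n],
\]
which I would simply unwind from the definitions. Consequently $\bigcap_{i\in\nu}\alpha_i \neq \emptyset$ precisely when $\Delta$ contains a face $\tau$ with $\tau \supseteq \nu$. If such a $\tau$ exists then $\nu \in \Delta$, since a simplicial complex is closed under passing to subsets; conversely, if $\nu \in \Delta$ then, being nonempty, $\nu$ itself serves as such a $\tau$. Thus $\nu \in \nerve(\tilde\Delta) \iff \nu \in \Delta$ for every nonempty $\nu$, which together with the usual convention on the empty face gives $\nerve(\tilde\Delta) = \Delta$.

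I expect the only delicate point to be bookkeeping rather than substance: one must make sure the $\alpha_i$ really are the facets of $\tilde\Delta$ (the incomparability observation) and must pin down the degenerate conventions --- the empty face, and potential labels in $[n]$ not lying in any face of $\Delta$ --- which is why I normalize $\Delta$ to have full vertex set and use only the nonempty faces of $\Delta$ as the vertices of $\tilde\Delta$. This construction is not the most economical one --- Gr\"unbaum's, reproduced in the Appendix, uses fewer vertices --- but it establishes existence, which is all that \refT{ff_theorem} requires.
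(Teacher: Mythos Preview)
Your argument is correct. Both your construction and Gr\"unbaum's (reproduced in the Appendix) follow the same blueprint: index the maximal faces of $\tilde\Delta$ by the vertices of $\Delta$, and arrange that $\bigcap_{i\in\nu}\alpha_i\neq\emptyset$ exactly when $\nu\in\Delta$. The difference is in the vertex set of $\tilde\Delta$. You take \emph{all} nonempty faces of $\Delta$ as vertices and let $\alpha_i$ be the full star of $i$; this makes the key identity $\bigcap_{i\in\nu}\alpha_i=\{\tau:\nu\subseteq\tau\}$ and the incomparability of the $\alpha_i$ (via the singletons $\{i\}$) completely transparent. Gr\"unbaum instead keeps only the maximal faces of $\Delta$ as vertices, adjoining an extra ``tag'' vertex $i$ only for those $i$ that are not already singled out as an intersection of maximal faces; this yields a much smaller $\tilde\Delta$ but forces the case split in the definition of $\beta_i$ and a slightly more delicate check that the $\beta_i$ are pairwise incomparable. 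Since Theorem~\ref{T:ff_theorem} only needs existence, your simpler (if less economical) construction is entirely adequate; the cost is that the resulting feedforward network in Proposition~\ref{P:construction} would have an input layer of size $|\Delta\setminus\{\emptyset\}|$ rather than roughly $|\max(\Delta)|$.
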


\begin{proof}[Proof of Theorem \ref{T:ff_theorem}.] Given a set $\mathcal M\subset 2^{[n]}$ of maximal patterns, choose an abstract simplicial complex  $\tilde\Delta$ such that $\nerve(\tilde\Delta)=\Delta(\mathcal M).$ By Proposition \ref{P:construction}, the prescription \eqref{eq:explicitconstruction} yields a one-layer network network such that 
$$\Delta(\C(U,\theta))=\nerve(\tilde \Delta)=\Delta(\mathcal M) \quad \implies \quad\max (\C(U,\theta))=\mathcal M.$$
\end{proof}
\begin{proof}[Proof of Corollary \ref{C:topological}]
Given an abstract simplicial complex $\Delta \subseteq 2^{[n]}$, consider  a  one-layer neural network $(U, \theta)$  from   \refT{ff_theorem}, so that  
$\max(\C(U,\theta)) = \max (\Delta).$ The codewords in $\C(U, \theta)$ are identified with non-empty polyhedra  in the positive orthant as in equation \eqref{eq:codedescription}, with the empty codeword $\sigma=\emptyset$ identified with the polyhedron $\mathcal{P}\od \bigcap_{i\in[n]} H_i^- .$ 
 
Now, observe that the region $Y=\RR^m_{\geq 0} \setminus \mathcal{P} $ is covered by the convex sets $ \{H_i^+ \}_{i \in [n]}$, and for any non-empty $\sigma \subseteq [n]$ the intersection $\cap_{i \in \sigma} H_i^+$ is non-empty  if and only if $\sigma \in \Delta(\C(U, \theta))=\Delta$. Thus $\Delta$  is the nerve of the cover of $Y$ by the convex sets $\{H_i^+\}$.Therefore,   by the nerve lemma\footnote{See e.g. \cite{Hatcher}, Corollary 4G.3 p. 460.}, $Y$ is homotopy equivalent to $\Delta$.
\end{proof}

\subsection{Networks constrained by Dale's Law}\label{sec:DaleProofs}
\begin{proof}[Proof of \refP{contractible}]

Suppose   each neuron in the input layer is either excitatory or inhibitory, i.e.  one can reorder the input neurons so that $U = [U_+ \mid U_-]$. Denote by  $\sigma_{\max}\subseteq\{1,\dots, n\}$  the set of all neurons in the output layer that are  either ``on'' in the absence of external drive  or  receive at least one excitatory connection: 
\begin{equation*}
\sigma_{\max}= \{ i\;\vert -\theta_i>0\}\cup \{i \;\vert  \, \text{ there exists } j \text{ with }  U_{ij}>0   \}.\, 
\end{equation*}
Because $U=[U_+ \mid U_-],$  setting the firing rates of the excitatory neurons sufficiently high\footnote{
Setting  $ y_j>\max_{\{i: U_{ij}>0\}} \left\{\frac{\vert \theta_i\vert }{U_{ij}} \right\}$ is sufficient.
} and the firing rates of the inhibitory neurons to zero yields a firing rate vector $x\in \mathbb R^n_{\geq0}$ with support  $\sigma_{\max}\in \C(U,\theta).$ Moreover,  any element of the code $\C(U,\theta)$ must be a subset of $\sigma_{\max}.$
\end{proof}

\bigskip
\noindent {\bf Acknowledgements:} This work was supported by NSF DMS-1122519. V.I. is thankful to Carly Klivans for pointing to the reference \cite{{Grunbaum}}; the authors are also grateful to Carina Curto for comments on  the manuscript. 
\bigskip 

\appendix
\section{Inverse of the nerve functor}

For completeness, we include the construction due to Branko Gr{\"u}nbaum   \cite{Grunbaum} of a simplicial complex $\tilde{\Delta}$ in the collection   $\mathcal{N}^{-1}(\Delta)$ for any abstract simplicial complex $\Delta$. 

 \label{Co:nerve_inv} Suppose $\Delta \subseteq 2^{[n]}$ is an abstract simplicial complex with maximal faces $\max(\Delta) = \{\alpha_1, \dots, \alpha_k\}$. Write $I(\Delta)$ for collection of vertices of $\Delta$ which can be written as an intersection of its maximal faces, $$I(\Delta) = \{v \in 2^{[n]}\; |\;  \exists \;\sigma  \subseteq [k] \text{ with } \{v\} = \bigcap_{i\in \sigma } \alpha_i\}.$$
 Now, choose the vertices of $\tilde{\Delta}$ to be the maximal faces of $\Delta$ along with vertices of $\Delta$ which do not appear as intersections of maximal faces, $$V = \max(\Delta) \cup (2^{[n]} \setminus I(\Delta)).$$  The maximal faces $\max(\tilde{\Delta}) = \{\beta_i\}_{i \in [n]}$, are chosen in one-to-one correspondence with the vertices of $\Delta$, with face $\beta_i$ supported on elements of $V$ which contain vertex $i$, 
$$
\beta_i = \begin{cases}\{\alpha_j \in \max(\Delta)\; |\; i \in \alpha_j\}& \text{ if }i \in I(\Delta)\\
\{\alpha_j \in \max(\Delta) \;|\; i \in \alpha_j\} \cup \{i\}&\text{ if }i \not\in I(\Delta) \\
\end{cases}.
$$
By construction, the only maximal intersections of the elements of $\max(\nerve(\tilde{\Delta}))$ are those elements of $V$ corresponding to elements of $\max(\Delta)$, thus $\nerve(\tilde\Delta)=\Delta$  and  $\tilde{\Delta} \in \nerve^{-1}(\Delta)$.
 
\medskip 
\noindent {\it Example.}  \label{Ex:inv}Let $\Delta\subseteq 2^{[6]}$ have maximal faces $\max(\Delta) = \{\{1,2,3\}$, $\{2,3,4\}$ and $\{2, 5, 6\}\}$. The set $I(\Delta)$ consists of the single vertex $2$, which is the only vertex which appears as an intersection of maximal faces.
The complex $\tilde{\Delta}$ thus has vertices $\{\{1, 2, 3\}, \{2,3,4\}, \{2,5,6\}, 1, 3, 4, 5, 6\}$ and maximal faces $\{\{1, \{1,2,3\}\}$, $\{\{1,2,3\}$, $\{2,3,4\}$, $\{2,5,6\}\}$, $\{3,\{1,2,3\}$, $\{2,3,4\}\}$, $\{4, \{2,3,4\}\}$, $\{5, \{2,5,6\}\}$, $\{6, \{2,5,6\}\}$.

\bibliographystyle{plain}
\bibliography{refs}
\end{document}